\documentclass[format=acmsmall, review=false]{acmart}
\usepackage{acm-ec-26}
\setcopyright{none}
\usepackage{mathtools}
\usepackage{booktabs}

\setcitestyle{authoryear}

\newcommand{\TV}{d_{\mathrm{TV}}}

\begin{document}

\title[Quantifying Theoretical AI Alignment Guarantees]{Quantifying Theoretical AI Alignment Guarantees: Receiver-Utility Bounds in Bayesian Persuasion}
\author{Eric Yachbes}
\affiliation{%
  \institution{Cornell University}
  \city{Ithaca}
  \state{NY}
  \country{United States}
}
\author{Eva Tardos}
\affiliation{%
  \institution{Cornell University}
  \city{Ithaca}
  \state{NY}
  \country{United States}
}

\begin{abstract}
Misalignment can change how information moves from an AI agent to a human user.
We model this as an information advantage: the AI agent observes the world
state, while the human receiver only knows a prior and must act after seeing the
agent's signal. A strategic AI sender may withhold evidence or garble
information in order to steer the human's decision. We ask how much useful
information can still reach the human when the AI optimizes a misaligned
objective. We study a Bayesian persuasion model in which the world state is a
bit string, the human receiver wants to guess the bits correctly, and a single
AI sender wants the receiver to guess as many bits as possible as $1$. For a
prior $\mu$, let $R_0(\mu)$ be the receiver's utility from using only the prior,
and let $R_{\max}(\mu)$ be the largest receiver utility among signaling schemes
that are optimal for the sender. We prove
$R_{\max}(\mu)/R_0(\mu)\leq 3/2$. This bound improves for priors close to the
independent product prior with the same marginals: if
$\mu(x)\geq (1-\eta)\pi_\mu(x)$ for every state $x$, then
$R_{\max}(\mu)\leq R_0(\mu)+\eta n$. We also give a six-bit prior for which
$R_{\max}(\mu)/R_0(\mu)=39/31>5/4$, so no universal $5/4$ bound is possible.
\end{abstract}

\maketitle

\vspace{1cm}
\setcounter{tocdepth}{2}
\tableofcontents

\clearpage

\section{Introduction}

\subsection{Context}

Bayesian persuasion studies a sender who commits to an information policy before
a receiver takes an action \cite{kamenica2011bayesian}. This framework is useful
for AI alignment because an AI agent may have an information advantage over a
human user while also having incentives that differ from the user's incentives.
In our model, the AI is the informed sender: it observes the world state and
chooses what signal to send. The human receiver knows the prior and the
signaling scheme, observes the realized signal, and then acts. A misaligned AI
sender can therefore change the information that reaches the human. We ask how
much receiver utility can remain when the sender chooses an optimal persuasion
scheme for its own objective.

Much work on AI alignment studies misbehavior empirically, for example by
placing models in scenarios where they may mislead users or by training models
to act in more human-aligned ways. These analyses are important, but they do not
by themselves give general guarantees about what a strategic information holder
will reveal. The theoretical role of a persuasion model is to isolate the
mechanism by which incentives affect information transfer.

This paper uses a smaller model than the general persuasion setting. The world
state is a bit string and the receiver's utility is the number of correctly
guessed bits. The simplifying restriction is that the receiver values
coordinatewise accuracy additively. This makes it possible to state numerical
receiver-utility bounds that depend only on the receiver's utility under the
prior.

\subsection{Our Setting and Question}

We study this question in a bit-string model. The human receiver wants to guess
the coordinates of an unknown state correctly, while the AI sender wants the
receiver to guess as many coordinates as possible as $1$. The AI sender observes
the state and commits to a signaling scheme, and the human receiver best
responds to the posterior induced by the realized signal. The signal can be
interpreted as the information, advice, or evidence that the AI agent chooses to
reveal after observing the state.

We compare the receiver's utility under sender-optimal persuasion schemes with
the utility the receiver would obtain from the prior alone. Writing $R_0(\mu)$
for the prior-only utility and $R_{\max}(\mu)$ for the largest receiver utility
among sender-optimal schemes, we ask how large $R_{\max}(\mu)$ can be relative
to $R_0(\mu)$.

\subsection{Contributions}

The main results are the following.

First, Theorem~\ref{thm:universal-upper-bound} proves that, for every prior
$\mu$ on $\{0,1\}^n$, the receiver utility among sender-optimal schemes is at
most
\[
\frac{n+R_0(\mu)}{2}.
\]
Since $R_0(\mu)\geq n/2$, this implies the universal ratio bound
$R_{\max}(\mu)/R_0(\mu)\leq 3/2$. Thus, even when the sender chooses an
optimal persuasion scheme for its own objective, the receiver's utility cannot
exceed its prior utility by more than this universal factor.

Second, product and near-product priors admit sharper bounds.
Theorem~\ref{thm:product-prior-sender-value} and
Corollary~\ref{cor:product-no-receiver-gain} show that independent product
priors have no receiver gain over the prior benchmark. More generally,
Corollary~\ref{cor:pointwise-domination-stability} shows that if $\pi_\mu$
denotes the product prior with the same marginals as $\mu$ and
\[
\mu(x)\geq (1-\eta)\pi_\mu(x)
\qquad\text{for every }x\in\{0,1\}^n,
\]
then
\[
R_{\max}(\mu)\leq R_0(\mu)+\eta n,
\]
and therefore $R_{\max}(\mu)/R_0(\mu)\leq 1+2\eta$. This shows that weak
dependence among coordinates limits how much sender-optimal persuasion can
increase receiver utility over the prior benchmark.

Third, Theorem~\ref{thm:six-bit-lower-bound} gives a six-bit prior with
\[
\frac{R_{\max}(\mu)}{R_0(\mu)}=\frac{39}{31}> \frac54.
\]
Thus the constant $5/4$ is false as a universal receiver-utility bound:
correlations in the prior can create larger receiver gains than product-like
instances allow.

\section{Related Work}

Kamenica and Gentzkow introduced Bayesian persuasion as a model in which an
informed sender commits to an information structure before a receiver acts
\cite{kamenica2011bayesian}. Their central reduction is to view a signaling
scheme as a Bayes-plausible distribution over posterior beliefs, so that the
sender's problem becomes a concavification problem over receiver posteriors. We
keep the same commitment timing and Bayesian receiver response, but specialize
the action to a bit string and measure receiver welfare by coordinatewise
accuracy. This specialization lets us prove a receiver-utility ratio bound that
does not require solving the sender's full concavification problem.

The closest alignment motivation is Collina et al.'s model of emergent
alignment via competition \cite{collina2025emergent}. They consider a human
interacting with several differently misaligned AI agents in a multi-leader
Stackelberg game extending Bayesian persuasion to conversations. Their main
condition is geometric: if the human utility is approximately in the convex hull
of the agents' utilities, then in equilibrium competition can give the human
utility comparable to interacting with a perfectly aligned agent. This gives a
positive guarantee when model diversity is strong enough to satisfy the convex
hull condition. Our paper studies a complementary case: there is only one
misaligned sender, so there is no competitive force and no convex-hull
assumption. The question becomes how much receiver utility can remain solely
from the sender's optimal information policy.

Multi-sender persuasion shows both why competition can help and why broad
equilibrium statements are delicate. Gentzkow and Kamenica study symmetric
information games in which senders choose information structures and identify
conditions under which competition yields outcomes at least as informative as
collusion \cite{gentzkow2017competition}. Their model allows rich signal spaces,
including information structures that can be arbitrarily correlated across
senders. Hossain et al. instead study simultaneous independent senders from a
computational perspective and show that even computing a sender's best response
is NP-hard and that finding Nash equilibria is PPAD-hard in general
\cite{hossain2024multisender}. These results motivate working in a restricted
bit-guessing model where exact receiver-utility comparisons can be proved.

Haifeng Xu's work with Dughmi and subsequent work on public persuasion are
especially close to our setting. In public persuasion with no externalities,
there are many receivers, each receiver has a binary action, all receivers see
the same public signal, and each receiver's payoff depends on the state and on
her own action but not on the actions of other receivers. Our single receiver's
bitwise decision problem can be viewed in the same way: coordinate $i$ is a
binary-action receiver, the common signal is the sender's message, and the
coordinate chooses $1$ exactly when the posterior marginal of $X_i$ is at least
$1/2$. Under this identification, our obedience constraints are the usual public
persuasion obedience constraints for multiple binary receivers with no
externalities.

Dughmi and Xu separate private from public signaling in this model: private
signaling admits broad positive algorithmic results, while public signaling can
be hard to approximate even for additive objectives
\cite{dughmi2017algorithmic}. Xu later shows that public persuasion in the same
no-externalities model is fixed parameter tractable by using a geometric
characterization based on hyperplane arrangements \cite{xu2020tractability}.
Our objective is a particularly structured public-persuasion objective: the
sender wants to maximize the number of coordinates choosing $1$, while the
receiver's welfare is the number of correct coordinate actions. We use this
additional structure to prove a closed-form receiver-welfare bound rather than
an algorithm for computing an optimal public scheme.

Finally, this paper uses the primal-dual style common in recent work on
persuasion. Dworczak and Martini interpret optimal persuasion through prices
\cite{dworczak2019simple}, and Dworczak and Kolotilin give a general duality
theory in which dual variables certify optimality of persuasion schemes
\cite{dworczak2024duality}. The six-bit example below follows this certificate
logic: an explicit obedient scheme proves attainability, and an explicit dual
certificate proves that the sender payoff cannot exceed $5$.

\section{Model}

\subsection{States, Actions, and Payoffs}

Let $n\geq 1$. Let $X\in\{0,1\}^n$ be the world state, distributed according to
a common prior $\mu$. The receiver chooses an action $a\in\{0,1\}^n$. The
receiver's payoff is the number of coordinates guessed correctly,
\[
u_R(x,a)=\sum_{i=1}^n \mathbf{1}\{a_i=x_i\},
\]
and the sender's payoff is the number of coordinates guessed as $1$,
\[
u_S(a)=\sum_{i=1}^n a_i.
\]
The sender observes $X$ and commits to a signaling scheme. The receiver observes
the realized signal and best responds to the posterior. We use weak obedience:
if the posterior marginal of $X_i$ is exactly $1/2$ for a coordinate $i$, either
bit value may be selected for that coordinate.

\subsection{Direct Obedient Form}

Let $\mu$ be a prior on $\{0,1\}^n$. A signaling scheme can be written in direct
form as a joint distribution $q(x,a)$ over states $x\in\{0,1\}^n$ and
recommended actions $a\in\{0,1\}^n$ such that
\[
q(x,a)\geq 0,
\qquad
\sum_{a\in\{0,1\}^n}q(x,a)=\mu(x)
\quad\text{for every }x.
\]
Direct recommendations are without loss for the utilities in this paper: after
each signal, record the receiver action induced by that signal.

For each action $a$ and coordinate $i$, define
\[
C_i(x,a)=
\begin{cases}
\frac12-x_i, & a_i=1,\\
x_i-\frac12, & a_i=0.
\end{cases}
\]
The recommendation $a$ is obedient on coordinate $i$ exactly when
\[
\sum_x q(x,a) C_i(x,a)\leq 0.
\]
If $a_i=1$, this inequality says that the posterior probability of $X_i=1$ is
at least $1/2$. If $a_i=0$, it says that the posterior probability of $X_i=1$
is at most $1/2$.
We call $q$ an obedient direct scheme if these inequalities hold for every
recommended action $a$ and coordinate $i$.

\subsection{Utilities}

For an obedient direct scheme $q$, define
\[
R(q)=\sum_{x,a}q(x,a)\sum_{i=1}^n \mathbf{1}\{a_i=x_i\},
\qquad
S(q)=\sum_{x,a}q(x,a)\sum_{i=1}^n a_i.
\]
The sender's optimal value is
\[
S^*(\mu)=\max_{q\text{ an obedient direct scheme}} S(q).
\]
The receiver's best utility among sender-optimal obedient direct schemes is
\[
R_{\max}(\mu)
=
\max\{R(q):q\text{ an obedient direct scheme and }S(q)=S^*(\mu)\}.
\]

For any prior $D$ on $\{0,1\}^n$, write
\[
p_i(D)=\Pr_D[X_i=1],
\qquad
P(D)=\sum_{i=1}^n p_i(D),
\qquad
M(D)=\sum_{i=1}^n \min\{1,2p_i(D)\}.
\]
Here $P(D)$ is the prior expected number of $1$ coordinates in the state.
The quantity $M(D)$ is the coordinatewise upper envelope for sender utility:
on coordinate $i$, obedience can allow recommendation $1$ with probability at
most $1$ when $p_i(D)\geq 1/2$, and with probability at most $2p_i(D)$ when
$p_i(D)<1/2$.
The receiver's prior utility is
\[
R_0(D)=\sum_{i=1}^n \max\{p_i(D),1-p_i(D)\}.
\]

\section{Results}

\subsection{Basic Accounting}

\begin{lemma}[Prior utility identity]
\label{lem:prior-utility-identity}
For every prior $\mu$ on $\{0,1\}^n$,
\[
R_0(\mu)=n+P(\mu)-M(\mu).
\]
\end{lemma}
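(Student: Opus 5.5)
The identity is purely coordinatewise: all three quantities are sums over $i$ of functions of the single number $p_i=p_i(\mu)$. Writing $n=\sum_{i=1}^n 1$ and $P(\mu)=\sum_{i=1}^n p_i$, the plan is to prove, for every $p\in[0,1]$,
\[
\max\{p,1-p\}=1+p-\min\{1,2p\}.
\]
Summing this over $i=1,\dots,n$ then gives $R_0(\mu)=n+P(\mu)-M(\mu)$ directly from the definitions of $R_0$, $P$ and $M$.

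To verify the scalar identity, split at $p=1/2$.

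\begin{itemize}
\item If $p\geq 1/2$, then $\min\{1,2p\}=1$. The right-hand side is $1+p-1=p$, which equals $\max\{p,1-p\}$ since $p\geq 1-p$.
\item If $p<1/2$, then $\min\{1,2p\}=2p$. The right-hand side is $1+p-2p=1-p$, which equals $\max\{p,1-p\}$ since $1-p>p$.
\end{itemize}

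At $p=1/2$ both branches give $1/2$, so the choice of where to put the boundary case does not matter.

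There is no real obstacle. The only point needing care is that the case split for $\min\{1,2p\}$ and the one for $\max\{p,1-p\}$ both switch at $p=1/2$, so they align. No earlier result of the paper is needed; the lemma is a bookkeeping identity that rewrites the prior utility in terms of the sender's coordinatewise envelope $M(\mu)$.
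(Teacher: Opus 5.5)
Your proof is correct and matches the paper's argument: both verify the scalar identity $\max\{p,1-p\}=1+p-\min\{1,2p\}$ by splitting at $p=1/2$ and then sum over coordinates. The only cosmetic difference is that the paper uses the overlapping cases $p\leq 1/2$ and $p\geq 1/2$, while you split strictly and check the boundary $p=1/2$ separately.
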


\begin{proof}
For a coordinate with $p_i(\mu)\leq 1/2$,
\[
\max\{p_i(\mu),1-p_i(\mu)\}
=1-p_i(\mu)=1+p_i(\mu)-2p_i(\mu).
\]
For a coordinate with $p_i(\mu)\geq 1/2$,
\[
\max\{p_i(\mu),1-p_i(\mu)\}
=p_i(\mu)=1+p_i(\mu)-1.
\]
Summing over coordinates gives the identity.
\end{proof}

\begin{lemma}[Sender-receiver tradeoff]
\label{lem:sender-receiver-tradeoff}
For every prior $\mu$ and every obedient direct scheme $q$,
\[
R(q)+S(q)\leq n+P(\mu).
\]
\end{lemma}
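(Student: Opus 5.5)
We prove the inequality coordinate by coordinate and action by action. For a fixed recommended action $a$ and coordinate $i$, the contribution of $(x,a)$ to $R(q)+S(q)$ is $\mathbf{1}\{a_i=x_i\}+a_i$. We compare it with $1+x_i$, whose total over all pairs is $\sum_{x,a}q(x,a)(1+x_i)=1+p_i(\mu)$ by the marginal constraint $\sum_a q(x,a)=\mu(x)$. Summing over $i$ then gives $n+P(\mu)$. So it suffices to show, for each $a$ and $i$,
\[
\sum_x q(x,a)\bigl(\mathbf{1}\{a_i=x_i\}+a_i\bigr)\leq \sum_x q(x,a)\,(1+x_i).
\]

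The argument splits on the value of $a_i$.

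\emph{Case $a_i=0$.} The left integrand is $\mathbf{1}\{x_i=0\}=1-x_i$, and the right integrand is $1+x_i$. The difference, right minus left, is $2x_i\geq 0$, so the inequality holds pointwise. Obedience is not needed here.

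\emph{Case $a_i=1$.} The left integrand is $x_i+1$, which equals the right integrand. The inequality therefore holds with equality.

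Summing over $a$ and $i$ gives $R(q)+S(q)\leq n+P(\mu)$. In fact the argument yields the identity
\[
n+P(\mu)-R(q)-S(q)=2\sum_i\Pr_q[a_i=0,\,X_i=1].
\]

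There is no real obstacle. The only point to check is that the comparison quantity $1+x_i$ integrates to $1+p_i(\mu)$ regardless of the scheme, which follows directly from the marginal condition. Obedience is not used. The bound is purely an accounting identity, and obedience presumably enters only later, through $S(q)\leq M(\mu)$.
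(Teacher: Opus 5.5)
Your proof is correct and is the paper's argument: the paper also proves the pointwise inequality $\mathbf{1}\{a_i=x_i\}+a_i\leq 1+x_i$ and sums it against $q$ using the marginal constraint, without using obedience. Your case split on $a_i$ only makes that pointwise check explicit, and the exact slack identity $2\sum_i\Pr_q[A_i=0,X_i=1]$ is a correct extra.
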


\begin{proof}
For each state-action pair $(x,a)$ and coordinate $i$,
\[
\mathbf{1}\{a_i=x_i\}+a_i\leq 1+x_i.
\]
Summing over coordinates and taking expectations under $q$ gives
\[
R(q)+S(q)\leq n+\sum_{i=1}^n \Pr_\mu[X_i=1]=n+P(\mu).
\]
\end{proof}

\begin{lemma}[Coordinate upper bound for sender utility]
\label{lem:sender-coordinate-upper-bound}
For every prior $\mu$,
\[
S^*(\mu)\leq M(\mu).
\]
\end{lemma}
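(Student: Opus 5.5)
The plan is to fix an arbitrary obedient direct scheme $q$ and bound its sender utility coordinate by coordinate. By definition
\[
S(q)=\sum_{i=1}^n \sum_{x,a:\,a_i=1} q(x,a),
\]
so it suffices to show that for each coordinate $i$,
\[
s_i(q) := \Pr_q[A_i=1]\leq \min\{1,2p_i(\mu)\}.
\]
Summing over $i$ then gives $S(q)\leq M(\mu)$. Since the bound holds for every obedient $q$, it holds for the maximizer, so $S^*(\mu)\leq M(\mu)$.

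The bound $s_i(q)\leq 1$ is immediate, because $q$ is a probability distribution: its total mass is $\sum_x\mu(x)=1$. For the bound $s_i(q)\leq 2p_i(\mu)$ I will use obedience on coordinate $i$. For every action $a$ with $a_i=1$, obedience reads
\[
\sum_x q(x,a)\Bigl(\tfrac12-x_i\Bigr)\leq 0,
\qquad\text{that is,}\qquad
\tfrac12\sum_x q(x,a)\leq \sum_x q(x,a)\,x_i .
\]
Summing this over all $a$ with $a_i=1$ gives
\[
\tfrac12 s_i(q)\leq \Pr_q[X_i=1,\,A_i=1].
\]
Since $\sum_a q(x,a)=\mu(x)$, the right-hand side is at most $\Pr_q[X_i=1]=p_i(\mu)$. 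Hence $s_i(q)\leq 2p_i(\mu)$.

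Combining the two bounds gives $s_i(q)\leq \min\{1,2p_i(\mu)\}$ for each $i$, and summing over coordinates gives the lemma. There is no real obstacle here. The only point needing care is that obedience constrains each recommended action $a$ separately, so the per-action inequalities must be summed over all $a$ with $a_i=1$. This is legitimate because every such inequality has the same sign. Coordinates whose $0$-recommendations are obedient are unconstrained in this argument and do not need to be considered.
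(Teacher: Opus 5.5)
Your proof is correct and matches the paper's argument: you bound $\Pr[A_i=1]$ by $1$ trivially and by $2p_i(\mu)$ by summing the obedience constraints over all actions with $a_i=1$, then sum over coordinates. The only cosmetic difference is that the paper splits into the cases $p_i(\mu)\geq 1/2$ and $p_i(\mu)<1/2$, whereas you prove both bounds unconditionally and take the minimum.
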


\begin{proof}
Fix an obedient direct scheme $q$, and let $A$ be the recommended action random
variable under $q$. For coordinate $i$,
\[
S(q)=\sum_{i=1}^n \Pr[A_i=1].
\]
If $p_i(\mu)\geq 1/2$, then
\[
\Pr[A_i=1]\leq 1=\min\{1,2p_i(\mu)\}.
\]
If $p_i(\mu)<1/2$, the obedience constraints for recommended actions with
$a_i=1$ give
\[
\sum_{a:a_i=1}\sum_x q(x,a)\left(x_i-\frac12\right)\geq 0.
\]
Equivalently,
\[
\frac12\Pr[A_i=1]\leq \Pr[X_i=1,A_i=1]\leq p_i(\mu),
\]
so $\Pr[A_i=1]\leq 2p_i(\mu)$. Summing over coordinates gives
\[
S(q)\leq M(\mu).
\]
Taking the maximum over obedient direct schemes proves the lemma.
\end{proof}

\subsection{Universal Upper Bound}

\begin{lemma}[OR-scheme lower bound]
\label{lem:or-scheme-lower-bound}
For every prior $\mu$,
\[
S^*(\mu)\geq \frac{M(\mu)+P(\mu)}{2}.
\]
\end{lemma}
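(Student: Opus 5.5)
The plan is to prove the bound with one explicit obedient scheme. I will call it the OR-scheme, matching the lemma's name, and show that its sender value already exceeds $(M(\mu)+P(\mu))/2$ coordinate by coordinate. The construction is as follows. Draw the true state $X\sim\mu$ together with an independent auxiliary state $Z\sim\mu$, and recommend the coordinatewise OR $A=X\vee Z$, meaning $A_i=\max\{X_i,Z_i\}$. In direct form this is
\[
q(x,a)=\mu(x)\sum_{z:\,x\vee z=a}\mu(z),
\]
which has the correct state marginal $\mu$.

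\textbf{Step 1 (obedience).} I will check obedience through a swap-symmetry argument. The pair $(X,Z)$ is exchangeable, and $A$ is a symmetric function of the pair. Hence, conditional on any recommendation $A=a$, the pairs $(X,Z)$ and $(Z,X)$ have the same conditional law, and in particular
\[
\Pr[X_i=1\mid A=a]=\Pr[Z_i=1\mid A=a].
\]
There are two cases.
\begin{itemize}
\item If $a_i=1$, then at least one of $X_i,Z_i$ equals $1$. So the two equal conditional probabilities sum to at least $1$, and each is at least $1/2$. This is exactly the obedience inequality $\sum_x q(x,a)C_i(x,a)\le 0$ for $a_i=1$.
\item If $a_i=0$, then $X_i=0$ with certainty, so the posterior is $0\le 1/2$.
\end{itemize}
This step is where I expect the main care to be needed. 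Obedience must hold conditional on the entire vector $a$, not just on the marginal event $\{A_i=1\}$, and the exchangeability argument is what handles the correlations across coordinates. I would write it as a sum over pairs $(x,z)$ with $x\vee z=a$, pairing each term with its swapped term.

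\textbf{Step 2 (sender value).} Independence gives $\Pr[A_i=1]=1-(1-p_i)^2=2p_i-p_i^2$, where $p_i=p_i(\mu)$. So $S^*(\mu)\ge \sum_i (2p_i-p_i^2)$, and it suffices to prove the per-coordinate inequality
\[
2p_i-p_i^2\ \ge\ \tfrac12\bigl(\min\{1,2p_i\}+p_i\bigr).
\]
\begin{itemize}
\item If $p_i\le 1/2$, the right side is $\tfrac32 p_i$, and $2p_i-p_i^2-\tfrac32p_i=p_i(\tfrac12-p_i)\ge0$.
\item If $p_i\ge1/2$, the right side is $(1+p_i)/2$, and $2p_i-p_i^2-\tfrac{1+p_i}{2}=-\tfrac12(2p_i-1)(p_i-1)\ge 0$ on $[1/2,1]$.
\end{itemize}
Summing over $i$ gives $S^*(\mu)\ge (M(\mu)+P(\mu))/2$.
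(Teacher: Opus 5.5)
Your proof is correct and follows the paper's argument: the same OR-scheme with an independent copy $Z\sim\mu$, the same computation $\Pr[A_i=1]=2p_i-p_i^2$, and the same two-case per-coordinate inequality. The only difference is presentational. The paper checks obedience conditional on the finer signal, the unordered pair $\{X,Z\}$, whose posterior is uniform on its two states, and then implicitly pools signals by recommended action. Your exchangeability argument conditions on $A=a$ directly, so it uses the same swap symmetry without the pooling step.
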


\begin{proof}
Construct a signaling scheme whose induced direct scheme is
$q^{\mathrm{OR}}$. After observing $X$, let the sender draw an independent state
$Y\sim\mu$ and send the unordered pair $\{X,Y\}$. If $X\neq Y$, the posterior is
uniform on the two states in the pair. If $X=Y$, the posterior is degenerate.
The receiver's recommended action is
\[
a=X\vee Y,
\]
where $\vee$ denotes bitwise OR.

The induced direct scheme $q^{\mathrm{OR}}$ is obedient. If $a_i=1$, then at
least one of $X_i,Y_i$ is $1$, so the posterior marginal of $X_i$ is either $1$
or $1/2$. If $a_i=0$, both states have $0$ in coordinate $i$, so the posterior
marginal of $X_i$ is $0$.

The sender utility under $q^{\mathrm{OR}}$ is
\[
\begin{aligned}
S(q^{\mathrm{OR}})
&=\mathbb{E}[|X\vee Y|_1]\\
&=\sum_{i=1}^n \Pr[X_i=1\text{ or }Y_i=1]\\
&=\sum_{i=1}^n \left(1-(1-p_i(\mu))^2\right)
=\sum_{i=1}^n (2p_i(\mu)-p_i(\mu)^2).
\end{aligned}
\]
For every $p\in[0,1]$,
\[
2p-p^2\geq \frac{\min\{1,2p\}+p}{2}.
\]
If $p\leq 1/2$, this is $2p-p^2\geq 3p/2$. If $p\geq 1/2$, this is
\[
2p-p^2\geq \frac{1+p}{2},
\]
equivalently $(1-p)(2p-1)\geq 0$. Summing over coordinates gives
\[
S(q^{\mathrm{OR}})\geq \frac{M(\mu)+P(\mu)}{2}.
\]
Since $q^{\mathrm{OR}}$ is obedient, the same lower bound holds for
$S^*(\mu)$.
\end{proof}

\begin{theorem}[Universal upper bound]
\label{thm:universal-upper-bound}
For every prior $\mu$ on $\{0,1\}^n$,
\[
R_{\max}(\mu)\leq \frac{n+R_0(\mu)}{2}.
\]
Consequently,
\[
\frac{R_{\max}(\mu)}{R_0(\mu)}\leq \frac32.
\]
\end{theorem}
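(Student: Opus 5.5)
The plan is to combine the three accounting lemmas directly. Let $q$ be any obedient direct scheme with $S(q)=S^*(\mu)$. By Lemma~\ref{lem:sender-receiver-tradeoff},
\[
R(q)\leq n+P(\mu)-S(q)=n+P(\mu)-S^*(\mu).
\]
Lemma~\ref{lem:or-scheme-lower-bound} gives the lower bound $S^*(\mu)\geq (M(\mu)+P(\mu))/2$. Substituting it, we get
\[
R(q)\leq n+P(\mu)-\frac{M(\mu)+P(\mu)}{2}=n+\frac{P(\mu)-M(\mu)}{2}.
\]
By Lemma~\ref{lem:prior-utility-identity}, $P(\mu)-M(\mu)=R_0(\mu)-n$. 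Hence $R(q)\leq n+(R_0(\mu)-n)/2=(n+R_0(\mu))/2$. Taking the maximum over sender-optimal $q$ yields the first claim. The maximum defining $R_{\max}(\mu)$ exists because the obedient schemes form a nonempty compact polytope; for instance, the no-information scheme is obedient.

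For the ratio, I would divide by $R_0(\mu)$ to obtain
\[
\frac{R_{\max}(\mu)}{R_0(\mu)}\leq \frac12+\frac{n}{2R_0(\mu)}.
\]
Each term $\max\{p_i,1-p_i\}$ is at least $1/2$, so $R_0(\mu)\geq n/2$. This gives $n/(2R_0(\mu))\leq 1$, and the ratio is at most $3/2$.

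There is no real obstacle here. The only step needing care is the direction of the inequalities: the tradeoff lemma turns a \emph{lower} bound on the sender's optimal value into an \emph{upper} bound on receiver utility for every sender-optimal scheme. That is why the OR-scheme lemma, rather than the coordinate upper bound, is the relevant input.
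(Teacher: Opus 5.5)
Your proposal is correct and follows the paper's proof: the tradeoff lemma turns the OR-scheme lower bound on $S^*(\mu)$ into an upper bound on $R(q)$, the prior utility identity rewrites that bound as $(n+R_0(\mu))/2$, and $R_0(\mu)\geq n/2$ gives the $3/2$ ratio. The only difference is cosmetic: you substitute $P(\mu)-M(\mu)=R_0(\mu)-n$ directly, while the paper first writes $R(q)\leq R_0(\mu)+(M(\mu)-P(\mu))/2$ and then uses $M(\mu)-P(\mu)=n-R_0(\mu)$.
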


\begin{proof}
Let $q$ be any sender-optimal obedient direct scheme. By the sender-receiver
tradeoff, Lemma~\ref{lem:sender-receiver-tradeoff}, we have
$R(q)+S(q)\leq n+P(\mu)$. Since $q$ is sender-optimal, $S(q)=S^*(\mu)$. The
OR-scheme lower bound, Lemma~\ref{lem:or-scheme-lower-bound}, gives
$S^*(\mu)\geq (M(\mu)+P(\mu))/2$. Therefore
\[
\begin{aligned}
R(q)
&\leq n+P(\mu)-S^*(\mu)\\
&\leq n+P(\mu)-\frac{M(\mu)+P(\mu)}{2}
=n+\frac{P(\mu)}{2}-\frac{M(\mu)}{2}.
\end{aligned}
\]
Using the prior utility identity, Lemma~\ref{lem:prior-utility-identity},
$R_0(\mu)=n+P(\mu)-M(\mu)$, so this becomes
\[
R(q)\leq R_0(\mu)+\frac{M(\mu)-P(\mu)}{2}.
\]
The definitions of $M(\mu)$ and $P(\mu)$ give
\[
M(\mu)-P(\mu)
=\sum_{i=1}^n \min\{p_i(\mu),1-p_i(\mu)\}
=n-R_0(\mu).
\]
Therefore
\[
R(q)\leq R_0(\mu)+\frac{n-R_0(\mu)}{2}
=\frac{n+R_0(\mu)}{2}.
\]
Taking the maximum over sender-optimal obedient direct schemes gives the stated
upper bound on $R_{\max}(\mu)$. Since $R_0(\mu)\geq n/2$, the ratio bound
follows.
\end{proof}

\subsection{Product and Near-Product Priors}

For a prior $\mu$, let
\[
\pi_\mu=\bigotimes_{i=1}^n \mathrm{Bernoulli}(p_i(\mu))
\]
be the product prior with the same marginals as $\mu$.

\begin{theorem}[Product-prior sender value]
\label{thm:product-prior-sender-value}
For every prior $\mu$,
\[
S^*(\pi_\mu)=M(\mu).
\]
\end{theorem}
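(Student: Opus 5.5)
The upper bound comes for free from Lemma~\ref{lem:sender-coordinate-upper-bound}, applied to the prior $\pi_\mu$. Since $\pi_\mu$ has the same marginals as $\mu$, we have $p_i(\pi_\mu)=p_i(\mu)$ for every $i$, so $M(\pi_\mu)=M(\mu)$ and hence $S^*(\pi_\mu)\leq M(\mu)$. The real content is the matching lower bound. For that I will exhibit an obedient direct scheme under $\pi_\mu$ whose sender utility is exactly $M(\mu)$.

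The construction uses independence to treat the coordinates separately. For each coordinate $i$ I define a one-bit scheme $q_i(x_i,a_i)$ on $\{0,1\}^2$ with the following properties.
\begin{itemize}
\item If $p_i\geq 1/2$, recommend $a_i=1$ always. The posterior marginal is then $p_i\geq 1/2$, so this is obedient, and $\Pr[A_i=1]=1$.
\item If $p_i<1/2$, recommend $a_i=1$ whenever $x_i=1$. When $x_i=0$, recommend $a_i=1$ with probability $p_i/(1-p_i)$ and $a_i=0$ otherwise. Then $\Pr[X_i=1,A_i=1]=p_i$ and $\Pr[X_i=0,A_i=1]=p_i$, so the posterior after $a_i=1$ is exactly $1/2$; this is obedient under weak obedience. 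After $a_i=0$ the posterior is $0$. This gives $\Pr[A_i=1]=2p_i$.
\end{itemize}
The product scheme is $q(x,a)=\prod_i q_i(x_i,a_i)$. It has the right state marginal, $\sum_a q(x,a)=\prod_i \Pr_{\pi_\mu}[X_i=x_i]=\pi_\mu(x)$.

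The step that needs care is obedience of the joint recommendation $a$ on each coordinate $i$, because the receiver conditions on the whole vector $a$ and not only on $a_i$. The point is that the pairs $(X_j,A_j)$ are independent across $j$ under $q$. Hence
\[
\sum_x q(x,a)\,C_i(x,a)=\Bigl(\sum_{x_i} q_i(x_i,a_i)\,C_i(x,a)\Bigr)\prod_{j\neq i}\Pr[A_j=a_j].
\]
Here $C_i(x,a)$ depends only on $(x_i,a_i)$. The first factor is $\leq 0$ by the one-coordinate obedience verified above, and the second factor is nonnegative, so the whole sum is $\leq 0$. This is exactly where the product structure of the prior is used; for a correlated $\mu$ the factorization fails.

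Finally, $S(q)=\sum_i \Pr[A_i=1]=\sum_i\min\{1,2p_i(\mu)\}=M(\mu)$. So $S^*(\pi_\mu)\geq M(\mu)$, and together with the upper bound this gives equality.
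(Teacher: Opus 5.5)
Your proposal is correct and matches the paper's proof. You take the upper bound from Lemma~\ref{lem:sender-coordinate-upper-bound} using $M(\pi_\mu)=M(\mu)$, and you build the lower bound from the same independent bitwise rule: always recommend $1$ when $p_i\geq 1/2$, and otherwise recommend $1$ when $X_i=1$ and with probability $p_i/(1-p_i)$ when $X_i=0$. Your explicit factorization of $\sum_x q(x,a)C_i(x,a)$ is a more detailed version of the paper's remark that, under the product prior, the posterior marginal of $X_i$ given the full vector $A$ depends only on $A_i$.
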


\begin{proof}
Since $\pi_\mu$ and $\mu$ have the same marginals, $M(\pi_\mu)=M(\mu)$.
The coordinate upper bound, Lemma~\ref{lem:sender-coordinate-upper-bound},
gives $S^*(D)\leq M(D)$ for every prior $D$, and hence
\[
S^*(\pi_\mu)\leq M(\pi_\mu)=M(\mu).
\]

It remains to construct an obedient direct scheme for $\pi_\mu$ with sender
utility $M(\mu)$. Independently across coordinates, use the following bitwise
recommendation rule. If $p_i(\mu)\geq 1/2$, always recommend $A_i=1$. If
$p_i(\mu)<1/2$, recommend $A_i=1$ with probability $1$ when $X_i=1$, and
recommend $A_i=1$ with probability $p_i(\mu)/(1-p_i(\mu))$ when $X_i=0$.
Otherwise recommend $A_i=0$.

Under this rule, when $p_i(\mu)<1/2$ and recommended action $A_i=1$ has
positive probability, the posterior marginal of $X_i$ is exactly $1/2$; the
posterior marginal of $X_i$ after recommended action $A_i=0$ is $0$. Hence the
bitwise recommendation rule is obedient under weak obedience.

Because $\pi_\mu$ is a product prior and the recommendation rule is independent
across coordinates conditional on $X$, the posterior marginal of $X_i$ after
observing the full recommended action $A$ depends only on $A_i$. Therefore the
full direct scheme is obedient and has sender utility
\[
\sum_{i:p_i(\mu)\geq 1/2}1+\sum_{i:p_i(\mu)<1/2}2p_i(\mu)
=M(\mu).
\]
Thus $S^*(\pi_\mu)=M(\mu)$.
\end{proof}

\begin{lemma}[Sender-value deficit implies receiver bound]
\label{lem:sender-value-deficit}
If a prior $\mu$ satisfies
\[
S^*(\mu)\geq M(\mu)-\Delta,
\]
then
\[
R_{\max}(\mu)\leq R_0(\mu)+\Delta.
\]
\end{lemma}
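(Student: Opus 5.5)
The argument mirrors the proof of Theorem~\ref{thm:universal-upper-bound}, with the OR-scheme lower bound replaced by the hypothesis $S^*(\mu)\geq M(\mu)-\Delta$. Fix any sender-optimal obedient direct scheme $q$, so that $S(q)=S^*(\mu)$. The plan is to bound $R(q)$ using only three facts: the sender-receiver tradeoff, the hypothesized lower bound on $S^*(\mu)$, and the prior utility identity. Taking the maximum over all such $q$ then bounds $R_{\max}(\mu)$.

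First, apply Lemma~\ref{lem:sender-receiver-tradeoff} to $q$. This gives
\[
R(q)\leq n+P(\mu)-S(q)=n+P(\mu)-S^*(\mu).
\]
Next, substitute the hypothesis $S^*(\mu)\geq M(\mu)-\Delta$ to get
\[
R(q)\leq n+P(\mu)-M(\mu)+\Delta.
\]
By Lemma~\ref{lem:prior-utility-identity}, $n+P(\mu)-M(\mu)=R_0(\mu)$, so $R(q)\leq R_0(\mu)+\Delta$. Since $q$ was an arbitrary sender-optimal obedient direct scheme, taking the maximum gives $R_{\max}(\mu)\leq R_0(\mu)+\Delta$.

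I do not expect a real obstacle. The one point to check is that $R_{\max}(\mu)$ is well defined, i.e.\ that the maximum over sender-optimal schemes is attained. This holds because the set of obedient direct schemes is a nonempty compact polytope. It is nonempty because the uninformative scheme that always recommends the prior best response is obedient. Compactness ensures that $S^*$ is attained and that the sender-optimal face is itself a nonempty compact polytope on which the linear function $R$ attains its maximum. With that in place, the bound holds for every element of the face and hence for the maximum.
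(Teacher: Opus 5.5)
Your proposal is correct and follows essentially the same route as the paper: apply the sender-receiver tradeoff (Lemma~\ref{lem:sender-receiver-tradeoff}) to a sender-optimal $q$, substitute the hypothesis $S^*(\mu)\geq M(\mu)-\Delta$, and use the prior utility identity $R_0(\mu)=n+P(\mu)-M(\mu)$. Your remark that the maximum defining $R_{\max}(\mu)$ is attained is a correct addition that the paper leaves implicit: the feasible polytope is nonempty and compact, and the sender-optimal face is again a nonempty compact polytope on which the linear function $R$ attains its maximum.
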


\begin{proof}
Let $q$ be any sender-optimal obedient direct scheme for $\mu$. The
sender-receiver tradeoff gives
\[
R(q)\leq n+P(\mu)-S^*(\mu).
\]
Using $S^*(\mu)\geq M(\mu)-\Delta$ and the prior utility identity,
Lemma~\ref{lem:prior-utility-identity}, which gives
$R_0(\mu)=n+P(\mu)-M(\mu)$,
\[
R(q)\leq n+P(\mu)-M(\mu)+\Delta=R_0(\mu)+\Delta.
\]
Taking the maximum over sender-optimal obedient direct schemes gives the bound
on $R_{\max}(\mu)$.
\end{proof}

\begin{corollary}[Product priors give no receiver gain]
\label{cor:product-no-receiver-gain}
For every prior $\mu$,
\[
R_{\max}(\pi_\mu)=R_0(\pi_\mu).
\]
\end{corollary}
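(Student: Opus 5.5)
The plan is to prove the two inequalities $R_{\max}(\pi_\mu)\leq R_0(\pi_\mu)$ and $R_{\max}(\pi_\mu)\geq R_0(\pi_\mu)$ separately.

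\textbf{Upper bound.} Apply Lemma~\ref{lem:sender-value-deficit} to the prior $\pi_\mu$ with $\Delta=0$. This requires $S^*(\pi_\mu)\geq M(\pi_\mu)$. Theorem~\ref{thm:product-prior-sender-value} gives $S^*(\pi_\mu)=M(\mu)$. Since $\pi_\mu$ has the same marginals as $\mu$, we have $M(\pi_\mu)=M(\mu)$. Hence $R_{\max}(\pi_\mu)\leq R_0(\pi_\mu)$.

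\textbf{Lower bound.} We must show that some sender-optimal obedient scheme gives the receiver at least $R_0(\pi_\mu)$. The natural candidate is the bitwise scheme from the proof of Theorem~\ref{thm:product-prior-sender-value}, which attains $S=M(\mu)=S^*(\pi_\mu)$ and is therefore sender-optimal. We compute its receiver utility coordinate by coordinate; write $p_i=p_i(\mu)$.
\begin{itemize}
\item If $p_i\geq 1/2$, the scheme always recommends $1$. The expected number of correct guesses on coordinate $i$ is $p_i=\max\{p_i,1-p_i\}$.
\item If $p_i<1/2$, the receiver is correct when $X_i=1$ (probability $p_i$, recommendation $1$). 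She is also correct when $X_i=0$ and the recommendation is $0$, which has probability $(1-p_i)\bigl(1-p_i/(1-p_i)\bigr)=1-2p_i$. The total is $p_i+1-2p_i=1-p_i=\max\{p_i,1-p_i\}$.
\end{itemize}
Summing over coordinates gives $R(q)=R_0(\pi_\mu)$, since the marginals of $\pi_\mu$ equal those of $\mu$. Therefore $R_{\max}(\pi_\mu)\geq R_0(\pi_\mu)$.

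A cleaner alternative for the lower bound avoids the case computation. The scheme attains $S(q)=M(\mu)$, and on every $(x,a)$ pair with positive probability the inequality $\mathbf{1}\{a_i=x_i\}+a_i\leq 1+x_i$ is tight: it fails to be tight only when $a_i=1$ and $x_i=0$ gives $1\leq 1$, which is still tight, while $a_i=0$ with $x_i=1$ never occurs. So equality holds in Lemma~\ref{lem:sender-receiver-tradeoff}, and
\[
R(q)=n+P(\mu)-M(\mu)=R_0(\pi_\mu)
\]
by Lemma~\ref{lem:prior-utility-identity}. The fact that $a_i=0,\ x_i=1$ never occurs is easy to check from the scheme's definition.

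There is no real obstacle here. The only point needing care is confirming that the constructed scheme is sender-optimal, which is immediate from Theorem~\ref{thm:product-prior-sender-value}.
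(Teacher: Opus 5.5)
Your proof is correct and follows the paper's argument: the upper bound uses the sender-value deficit lemma with $\Delta=0$ together with $S^*(\pi_\mu)=M(\mu)=M(\pi_\mu)$, and the lower bound computes coordinate by coordinate the receiver utility of the bitwise product scheme, which is sender-optimal. Your alternative tightness argument is also valid, though its wording is self-contradictory as written; the correct statement is that $\mathbf{1}\{a_i=x_i\}+a_i\leq 1+x_i$ is tight in every case except $a_i=0$, $x_i=1$, and that case has probability zero under the scheme.
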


\begin{proof}
The product-prior sender value theorem,
Theorem~\ref{thm:product-prior-sender-value}, gives
$S^*(\pi_\mu)=M(\mu)$. Therefore the sender-value deficit lemma,
Lemma~\ref{lem:sender-value-deficit}, applies with $\Delta=0$ and gives
$R_{\max}(\pi_\mu)\leq R_0(\pi_\mu)$.

The product-prior construction in the proof of the previous theorem is
sender-optimal. Its receiver utility equals $R_0(\pi_\mu)$ coordinate by
coordinate. If $p_i(\mu)\geq 1/2$, the scheme always recommends $A_i=1$, giving
receiver utility $p_i(\mu)$. If $p_i(\mu)<1/2$, recommended action $A_i=1$
contributes correctness probability $p_i(\mu)$ and recommended action $A_i=0$
contributes correctness probability $1-2p_i(\mu)$, for total
$1-p_i(\mu)$. Summing over coordinates gives receiver utility $R_0(\pi_\mu)$,
so $R_{\max}(\pi_\mu)\geq R_0(\pi_\mu)$.
\end{proof}

\begin{lemma}[Sender-value Lipschitz bound on $K_\delta$]
\label{lem:sender-value-lipschitz}
Let $\delta>0$ and
\[
K_\delta=\{D\in\Delta(\{0,1\}^n):D(x)\geq \delta\text{ for all }x\}.
\]
For any $D,E\in K_\delta$,
\[
|S^*(D)-S^*(E)|
\leq
\frac{2n}{\delta}\TV(D,E).
\]
\end{lemma}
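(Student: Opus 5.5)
By symmetry it suffices to show $S^*(E)\geq S^*(D)-\frac{2n}{\delta}\TV(D,E)$ for all $D,E\in K_\delta$. The plan is to take an optimal obedient direct scheme for $D$ and transport it to $E$ by mixing it with something that is trivially obedient. Write $\varepsilon=\TV(D,E)$. If $\varepsilon\geq\delta/2$, the claimed bound is at least $n$, which is trivial because $0\leq S^*\leq n$. So assume $\varepsilon<\delta/2$.

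\textbf{The decomposition.} Let $m=\min\{D,E\}$ pointwise, so that $\sum_x m(x)=1-\varepsilon$. Write
\[
E=m+(E-m)=\lambda D'+(E-m),
\]
where the remainders $D-m$ and $E-m$ each have total mass $\varepsilon$. I would like to express $E$ as a mixture $E=(1-t)D+tF$ with $F$ a probability distribution and $t$ small. This requires
\[
F=\frac{E-(1-t)D}{t}\geq 0,
\]
that is, $E(x)\geq(1-t)D(x)$ for every $x$. Since $D(x)-E(x)\leq\varepsilon$ and $D(x)\geq\delta$, we have
\[
\frac{E(x)}{D(x)}\geq 1-\frac{\varepsilon}{D(x)}\geq 1-\frac{\varepsilon}{\delta}.
\]
So $t=\varepsilon/\delta$ works, and $F$ is a valid distribution. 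This is where membership of $D$ in $K_\delta$ is used. A bound with constant $n/\delta$ might even be possible, but the stated $2n/\delta$ leaves slack.

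\textbf{The scheme for $E$.} Let $q_D$ be a sender-optimal obedient direct scheme for $D$, and let $q_F$ be any obedient direct scheme for $F$, for instance full revelation (recommend $a=x$), which is obedient. Define
\[
q_E=(1-t)\,q_D+t\,q_F .
\]
Its state marginal is $(1-t)D+tF=E$. The main concern is obedience: mixing two obedient direct schemes keeps obedience, because each obedience constraint $\sum_x q(x,a)C_i(x,a)\leq 0$ is linear in $q$ for a fixed recommendation $a$. Pooling the signals that carry the same recommended $a$ therefore preserves each inequality. So $q_E$ is obedient, with
\[
S(q_E)=(1-t)S^*(D)+tS(q_F)\geq S^*(D)-tn .
\]

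\textbf{Conclusion.} This gives
\[
S^*(E)\geq S^*(D)-\frac{n}{\delta}\varepsilon .
\]
Swapping the roles of $D$ and $E$ (only the lower bound $D\geq\delta$ on whichever prior is being transported is used) yields
\[
|S^*(D)-S^*(E)|\leq\frac{n}{\delta}\TV(D,E)\leq\frac{2n}{\delta}\TV(D,E).
\]
The only delicate step is checking the pointwise ratio bound that makes $F$ nonnegative. If the paper's convention for $\TV$ were the unnormalized $\ell_1$ distance, that bound becomes $D(x)-E(x)\leq 2\varepsilon$, which is exactly where the factor $2$ in $2n/\delta$ would come from. The argument is otherwise unchanged.
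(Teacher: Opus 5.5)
Your proof is correct, and it takes the primal route where the paper takes the dual one. The paper writes down the dual of the sender LP and notes that its feasible set does not depend on the prior. It uses the constraint at $a=x$ to get $\alpha_x\geq 0$, and then combines $D(x)\geq\delta$ with $\sum_x D(x)\alpha_x=S^*(D)\leq n$ to get $\|\alpha\|_\infty\leq n/\delta$. Finally it cross-evaluates the optimal dual of one prior against the other prior, bounding the cross term by $\|\alpha\|_\infty\|D-E\|_1=\frac{n}{\delta}\cdot 2\TV(D,E)$.

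You instead build an explicit obedient scheme for $E$ from the pointwise domination $E\geq(1-\varepsilon/\delta)D$, mixing an optimal scheme for $D$ with full revelation. This is exactly the device of the paper's mixture lower bound (Lemma~\ref{lem:mixture-lower-bound}). So your route shows that both stability results near product priors follow from one fact: $E\geq(1-t)D$ pointwise implies $S^*(E)\geq(1-t)S^*(D)$. It also avoids strong duality and the existence of an optimal dual solution. The dual route, in turn, matches the certificate logic of the six-bit upper bound and reads the $\alpha_x$ as bounded shadow prices of states.

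Your sharper constant $n/\delta$ is genuine, but it does not depend on the primal route. Since $\alpha\geq 0$, the paper could bound the cross term by $\|\alpha\|_\infty\sum_x(E(x)-D(x))^+=\frac{n}{\delta}\TV(D,E)$. The factor $2$ in the statement comes from that crude H\"older step, not from a TV convention; the paper uses $\|D-E\|_1=2\TV(D,E)$.

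Two small cleanups:
\begin{itemize}
\item The line introducing $m$, $\lambda$, $D'$ is never used and should be deleted.
\item For the $n/\delta$ version, run the mixing argument for every $\varepsilon<\delta$, where $t=\varepsilon/\delta<1$ still gives a valid $F$. For $\delta/2\leq\varepsilon<\delta$ the bound $n\varepsilon/\delta$ is below $n$, so it is not covered by your trivial case.
\end{itemize}
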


\begin{proof}
For the sender LP defining $S^*(D)$, take free dual variables $\alpha_x$ for
the Bayes constraints and nonnegative dual variables $\lambda_{a,i}$ for the
obedience constraints. The dual program is
\[
\min_{\alpha,\lambda} \sum_x D(x)\alpha_x
\]
subject to $\lambda_{a,i}\geq 0$ and
\[
\alpha_x-\sum_{i=1}^n (2a_i-1)\left(x_i-\frac12\right)\lambda_{a,i}
\geq |a|_1
\qquad\text{for every }x,a\in\{0,1\}^n.
\]
The dual feasible set does not depend on $D$. If $D\in K_\delta$ and
$(\alpha,\lambda)$ is an optimal dual solution, then setting $a=x$ gives
\[
\alpha_x-\frac12\sum_i\lambda_{x,i}\geq |x|_1,
\]
so $\alpha_x\geq 0$. Since $S^*(D)\leq n$ and
\[
\sum_x D(x)\alpha_x=S^*(D),
\]
we have $\delta\alpha_x\leq n$ for every $x$, and hence
$\|\alpha\|_\infty\leq n/\delta$.

For any $D,E\in K_\delta$, cross-evaluating optimal dual solutions for $D$ and
$E$ gives
\[
|S^*(D)-S^*(E)|
\leq
\frac{n}{\delta}\|D-E\|_1
=
\frac{2n}{\delta}\TV(D,E).
\]
\end{proof}

\begin{corollary}[Total-variation stability near product priors]
\label{cor:tv-stability}
If $\delta>0$ and $\mu,\pi_\mu\in K_\delta$, then
\[
R_{\max}(\mu)
\leq
R_0(\mu)+\frac{2n}{\delta}\TV(\mu,\pi_\mu).
\]
\end{corollary}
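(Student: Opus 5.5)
The plan is to chain three earlier results: the Lipschitz bound of Lemma~\ref{lem:sender-value-lipschitz}, the exact product-prior value from Theorem~\ref{thm:product-prior-sender-value}, and the deficit-to-receiver reduction of Lemma~\ref{lem:sender-value-deficit}.

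\emph{Step 1: lower-bound $S^*(\mu)$.} Since both $\mu$ and $\pi_\mu$ lie in $K_\delta$ by hypothesis, Lemma~\ref{lem:sender-value-lipschitz} applies to the pair $D=\mu$, $E=\pi_\mu$. It gives
\[
S^*(\mu)\geq S^*(\pi_\mu)-\frac{2n}{\delta}\TV(\mu,\pi_\mu).
\]

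\emph{Step 2: identify $S^*(\pi_\mu)$.} Theorem~\ref{thm:product-prior-sender-value} gives $S^*(\pi_\mu)=M(\mu)$. Substituting into Step~1,
\[
S^*(\mu)\geq M(\mu)-\Delta,
\qquad
\Delta=\frac{2n}{\delta}\TV(\mu,\pi_\mu).
\]

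\emph{Step 3: convert to a receiver bound.} Applying Lemma~\ref{lem:sender-value-deficit} with this $\Delta$ yields
\[
R_{\max}(\mu)\leq R_0(\mu)+\frac{2n}{\delta}\TV(\mu,\pi_\mu),
\]
which is exactly the claimed inequality.

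No step here is a genuine obstacle, since all the work has already been done in the earlier results. The only care needed is twofold. First, the Lipschitz lemma requires \emph{both} distributions to lie in $K_\delta$; this is precisely the hypothesis, and it is what guarantees the dual bound $\|\alpha\|_\infty\leq n/\delta$ for each prior's optimal dual solution. Second, only the one-sided direction $S^*(\mu)\geq S^*(\pi_\mu)-\cdots$ of the absolute-value bound is used.
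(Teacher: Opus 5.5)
Your proposal is correct and follows the paper's proof step for step: you use the one-sided form of Lemma~\ref{lem:sender-value-lipschitz}, substitute $S^*(\pi_\mu)=M(\mu)$ from Theorem~\ref{thm:product-prior-sender-value}, and apply Lemma~\ref{lem:sender-value-deficit} with $\Delta=\frac{2n}{\delta}\TV(\mu,\pi_\mu)$. As a minor aside, the one-sided inequality only needs the bound $\|\alpha\|_\infty\leq n/\delta$ on $\mu$'s optimal dual solution, which weak duality lets you evaluate against $\pi_\mu$, so the hypothesis $\mu\in K_\delta$ alone would suffice here.
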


\begin{proof}
The sender-value Lipschitz bound on $K_\delta$,
Lemma~\ref{lem:sender-value-lipschitz}, gives
$|S^*(\mu)-S^*(\pi_\mu)|
\leq (2n/\delta)\TV(\mu,\pi_\mu)$, and hence
\[
S^*(\mu)\geq S^*(\pi_\mu)-\frac{2n}{\delta}\TV(\mu,\pi_\mu).
\]
By the product-prior sender value theorem,
Theorem~\ref{thm:product-prior-sender-value}, $S^*(\pi_\mu)=M(\mu)$. Applying
the sender-value deficit lemma, Lemma~\ref{lem:sender-value-deficit}, with
\[
\Delta=\frac{2n}{\delta}\TV(\mu,\pi_\mu)
\]
proves the corollary.
\end{proof}

\begin{lemma}[Mixture lower bound for sender value]
\label{lem:mixture-lower-bound}
Suppose that, for some $\eta\in[0,1]$,
\[
\mu(x)\geq (1-\eta)\pi_\mu(x)
\qquad\text{for every }x\in\{0,1\}^n.
\]
Then
\[
S^*(\mu)\geq (1-\eta)M(\mu).
\]
\end{lemma}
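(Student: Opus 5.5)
Decompose $\mu$ as a mixture: since $\mu(x)\geq(1-\eta)\pi_\mu(x)$ pointwise, write
\[
\mu=(1-\eta)\pi_\mu+\eta\nu,
\qquad
\nu=\frac{\mu-(1-\eta)\pi_\mu}{\eta},
\]
where $\nu$ is a probability distribution on $\{0,1\}^n$. It is nonnegative by hypothesis and has total mass $1$ because both $\mu$ and $\pi_\mu$ do. The edge cases are trivial: if $\eta=0$ then $\mu=\pi_\mu$ and Theorem~\ref{thm:product-prior-sender-value} gives the claim, and if $\eta=1$ the claim is $S^*(\mu)\geq 0$.

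The key step is concavity of the sender value in the prior along this decomposition: $S^*(\mu)\geq(1-\eta)S^*(\pi_\mu)+\eta S^*(\nu)$. To prove it, take an optimal obedient direct scheme $q_1$ for $\pi_\mu$ and any obedient direct scheme $q_2$ for $\nu$ (for instance an optimal one, which exists since the LP is feasible and bounded). The sender implements the following: on observing $X$, draw a component label $Z\in\{1,2\}$ with posterior probabilities $(1-\eta)\pi_\mu(X)/\mu(X)$ and $\eta\nu(X)/\mu(X)$, then send the pair $(Z,A)$ with $A$ drawn from the conditional $q_Z(X,\cdot)/\text{(marginal)}$. The induced joint law of $(X,Z)$ is exactly the mixture, so conditional on $Z=1$ the state is distributed as $\pi_\mu$ and conditional on $Z=2$ as $\nu$.

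After signal $(Z,A)$ the posterior is the posterior of $q_Z$ given $A$, so obedience is inherited from $q_1$ and $q_2$. In direct form, define $q=(1-\eta)q_1\otimes\mathbf 1_{Z=1}+\eta q_2\otimes\mathbf 1_{Z=2}$. If one insists on recommendations being only $a$, merging signals with the same $a$ preserves the linear obedience inequalities $\sum_x q(x,a)C_i(x,a)\leq0$, because they are sums of nonpositive terms. Thus $q=(1-\eta)q_1+\eta q_2$ is obedient for $\mu$ and satisfies the Bayes constraint $\sum_a q(x,a)=(1-\eta)\pi_\mu(x)+\eta\nu(x)=\mu(x)$. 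This observation makes the argument a one-line convex-combination check, with no need for the $Z$-label story.

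Then $S(q)=(1-\eta)S(q_1)+\eta S(q_2)\geq(1-\eta)M(\pi_\mu)=(1-\eta)M(\mu)$, using Theorem~\ref{thm:product-prior-sender-value} together with $S(q_2)\geq0$ (take $q_2$ to be, say, the all-prior-best-response recommendation, which is obedient for $\nu$). The only real subtlety I expect is verifying that the obedience constraints are linear in $q$, so that they are preserved under nonnegative combination. They are, since each is a sum over $x$ for a fixed $a$.
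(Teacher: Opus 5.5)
Your proof is correct and is essentially the paper's argument: the paper also writes $\mu=(1-\eta)\pi_\mu+\eta H$ with the residual prior $H$, mixes a sender-optimal scheme for $\pi_\mu$ (value $M(\mu)$ by Theorem~\ref{thm:product-prior-sender-value}) with an arbitrary obedient scheme for $H$, and uses that the Bayes and obedience constraints are linear in $q$. The $Z$-label implementation is an unneeded detour, as you note yourself, and the separate $\eta=1$ case is already covered by the general mixture argument.
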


\begin{proof}
If $\eta=0$, then $\mu=\pi_\mu$ and the claim follows from
Theorem~\ref{thm:product-prior-sender-value}, which gives
$S^*(\pi_\mu)=M(\mu)$. If $\eta>0$, define the residual prior
\[
H(x)=\frac{\mu(x)-(1-\eta)\pi_\mu(x)}{\eta}.
\]
Then
\[
\mu=(1-\eta)\pi_\mu+\eta H.
\]
Let $q^\Pi$ be a sender-optimal obedient direct scheme for $\pi_\mu$, so
$S(q^\Pi)=M(\mu)$. Let $q^H$ be any obedient direct scheme for $H$. The mixture
\[
q=(1-\eta)q^\Pi+\eta q^H
\]
is an obedient direct scheme for $\mu$ because the marginal constraints and
obedience constraints are linear in $q$. Its sender utility satisfies
\[
S(q)=(1-\eta)S(q^\Pi)+\eta S(q^H)\geq (1-\eta)M(\mu).
\]
Therefore $S^*(\mu)\geq (1-\eta)M(\mu)$.
\end{proof}

\begin{corollary}[Pointwise-domination stability near product priors]
\label{cor:pointwise-domination-stability}
Suppose that, for some $\eta\in[0,1]$,
\[
\mu(x)\geq (1-\eta)\pi_\mu(x)
\qquad\text{for every }x\in\{0,1\}^n.
\]
Then
\[
R_{\max}(\mu)\leq R_0(\mu)+\eta M(\mu)\leq R_0(\mu)+\eta n.
\]
Consequently,
\[
\frac{R_{\max}(\mu)}{R_0(\mu)}
\leq
1+\eta\frac{M(\mu)}{R_0(\mu)}
\leq
1+2\eta.
\]
\end{corollary}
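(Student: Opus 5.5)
The plan is to chain Lemma~\ref{lem:mixture-lower-bound} with Lemma~\ref{lem:sender-value-deficit}. Under the pointwise-domination hypothesis, Lemma~\ref{lem:mixture-lower-bound} gives
\[
S^*(\mu)\geq (1-\eta)M(\mu)=M(\mu)-\eta M(\mu).
\]
We then apply Lemma~\ref{lem:sender-value-deficit} with $\Delta=\eta M(\mu)$. This yields $R_{\max}(\mu)\leq R_0(\mu)+\eta M(\mu)$, which is the first inequality.

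The second inequality, $\eta M(\mu)\leq \eta n$, holds because $M(\mu)=\sum_i\min\{1,2p_i(\mu)\}$ is a sum of $n$ terms, each at most $1$, and $\eta\geq 0$.

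For the ratio statement, we divide the first bound by $R_0(\mu)$. This is legitimate since $R_0(\mu)\geq n/2>0$, because each term $\max\{p_i,1-p_i\}$ is at least $1/2$. We get
\[
\frac{R_{\max}(\mu)}{R_0(\mu)}\leq 1+\eta\frac{M(\mu)}{R_0(\mu)},
\]
and then use $M(\mu)\leq n$ and $R_0(\mu)\geq n/2$ to obtain $M(\mu)/R_0(\mu)\leq 2$. This gives the bound $1+2\eta$.

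No step is a real obstacle, since all the substantive work sits in Lemma~\ref{lem:mixture-lower-bound} and Lemma~\ref{lem:sender-value-deficit}. The only points needing care are the positivity of $R_0(\mu)$ and the trivial bound $M(\mu)\leq n$.
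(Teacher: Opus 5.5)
Your proposal is correct and matches the paper's proof: Lemma~\ref{lem:mixture-lower-bound} feeds into Lemma~\ref{lem:sender-value-deficit} with $\Delta=\eta M(\mu)$, and the remaining inequalities follow from $M(\mu)\leq n$ and $R_0(\mu)\geq n/2$. Your explicit remark that $R_0(\mu)>0$ justifies dividing by it, a point the paper leaves implicit.
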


\begin{proof}
The mixture lower bound, Lemma~\ref{lem:mixture-lower-bound}, gives
$S^*(\mu)\geq (1-\eta)M(\mu)$, equivalently
\[
S^*(\mu)\geq M(\mu)-\eta M(\mu).
\]
The sender-value deficit lemma, Lemma~\ref{lem:sender-value-deficit}, converts
this lower bound on $S^*(\mu)$ into the receiver bound with
$\Delta=\eta M(\mu)$:
\[
R_{\max}(\mu)\leq R_0(\mu)+\eta M(\mu).
\]
The additive bound follows from $M(\mu)\leq n$. The ratio bound follows from
$M(\mu)\leq n$ and $R_0(\mu)\geq n/2$.
\end{proof}

\subsection{A Six-Bit Lower Bound}

For this subsection, let $\mu^{\mathrm{lb}}$ be the prior on $\{0,1\}^6$ given
by
\[
\begin{aligned}
\mu^{\mathrm{lb}}(010111)&=\frac{3}{10},&
\mu^{\mathrm{lb}}(100110)&=\frac{1}{10},&
\mu^{\mathrm{lb}}(100011)&=\frac{1}{10},\\
\mu^{\mathrm{lb}}(001000)&=\frac{1}{5},&
\mu^{\mathrm{lb}}(111000)&=\frac{1}{5},&
\mu^{\mathrm{lb}}(100101)&=\frac{1}{10}.
\end{aligned}
\]

\begin{lemma}[Six-bit primal scheme]
\label{lem:six-bit-primal}
For the prior $\mu^{\mathrm{lb}}$, there is an obedient direct scheme $q$ with
\[
S(q)=5
\qquad\text{and}\qquad
R(q)=\frac{39}{10}.
\]
\end{lemma}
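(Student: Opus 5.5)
The plan is to write down an explicit direct scheme built from pairwise poolings, in the spirit of the OR-scheme of Lemma~\ref{lem:or-scheme-lower-bound}, and then read off $R(q)$ from the tradeoff identity rather than computing it directly. Label the six support states $A=010111$ (mass $3/10$), $B=100110$, $C=100011$, $F=100101$ (mass $1/10$ each), $D=001000$ and $E=111000$ (mass $1/5$ each).

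The key observation is that the recommendation is always the bitwise OR of two states pooled with \emph{equal} mass. Under this rule, every coordinate recommended as $1$ has posterior marginal at least $1/2$. Every coordinate recommended as $0$ is $0$ in both pooled states, so it has posterior $0$. I will split the prior mass into equal-mass pairs:
\[
\begin{aligned}
&\tfrac15\text{ of }E \text{ with } \tfrac15\text{ of }A &&\to a=111111,\\
&\tfrac1{10}\text{ of }D \text{ with } B &&\to a=101110,\\
&\tfrac1{10}\text{ of }D \text{ with } C &&\to a=101011,\\
&\text{the remaining }\tfrac1{10}\text{ of }A \text{ with } F &&\to a=110111.
\end{aligned}
\]
Each state's total mass is fully allocated, so the Bayes constraints $\sum_a q(x,a)=\mu^{\mathrm{lb}}(x)$ hold. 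The four recommended actions are distinct, so each signal's posterior is exactly uniform on its pair. Obedience in the sense of the direct form then follows from the observation above, using weak obedience at posterior exactly $1/2$.

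The sender utility is then a short sum:
\[
S(q)=\tfrac25\cdot 6+\tfrac15\cdot4+\tfrac15\cdot4+\tfrac15\cdot5=5.
\]
For the receiver, I will not tally correct guesses directly. Instead I use the pointwise inequality from the proof of Lemma~\ref{lem:sender-receiver-tradeoff}: $\mathbf 1\{a_i=x_i\}+a_i\le 1+x_i$ holds with equality unless $x_i=1$ and $a_i=0$. Every recommended action is an OR containing the true state, so $a\ge x$ coordinatewise on the support of $q$, and equality holds everywhere. Hence $R(q)=n+P(\mu^{\mathrm{lb}})-S(q)$.

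The marginals are $p=(1/2,1/2,2/5,1/2,1/2,1/2)$, so $P(\mu^{\mathrm{lb}})=29/10$. This gives $R(q)=6+29/10-5=39/10$.

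The only real obstacle is finding a pairing in which every pair has equal masses, which is what makes the OR-obedience argument go through. Once the pairing above is fixed, each step is a one-line check: mass bookkeeping per state, distinctness of the four signals, and the arithmetic for $S$ and $P$.
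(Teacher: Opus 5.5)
Your proof is correct and takes essentially the paper's route: the paper's scheme is also an equal-mass pairwise pooling with the bitwise OR of the pair as the recommendation, and it checks obedience through the posterior marginals of each pair. You use a different but equally valid pairing, and you obtain $R(q)=n+P(\mu^{\mathrm{lb}})-S(q)$ from the equality case $a\geq x$ of the tradeoff inequality, where the paper tallies correct bits directly; your route also explains why the scheme meets the upper bound used in Theorem~\ref{thm:six-bit-lower-bound}.
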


\begin{proof}
Consider the following direct scheme $q$. All omitted entries have mass zero.
\[
\begin{array}{c|c}
\text{recommended action }a & \text{state masses assigned to }a\\
\hline
100111 & 100101:\frac{1}{10},\;100110:\frac{1}{10}\\
011111 & 001000:\frac{1}{5},\;010111:\frac{1}{5}\\
111011 & 100011:\frac{1}{10},\;111000:\frac{1}{10}\\
111111 & 010111:\frac{1}{10},\;111000:\frac{1}{10}
\end{array}
\]
The posterior marginals of $X$ conditional on each recommended action in the
support of $q$ are
\[
\begin{array}{c|c}
a & \text{posterior marginals}\\
\hline
100111 & (1,0,0,1,\frac12,\frac12)\\
011111 & (0,\frac12,\frac12,\frac12,\frac12,\frac12)\\
111011 & (1,\frac12,\frac12,0,\frac12,\frac12)\\
111111 & (\frac12,1,\frac12,\frac12,\frac12,\frac12).
\end{array}
\]
Thus every recommended action in the support of $q$ satisfies the obedience
constraints.

The sender utility is
\[
S(q)=
4\cdot\frac15
+5\cdot\frac25
+5\cdot\frac15
+6\cdot\frac15
=5.
\]
The receiver utility is
\[
R(q)
=\frac{1}{10}(5+5)
+\frac15(2+5)
+\frac{1}{10}(4+4)
+\frac{1}{10}(4+3)
=\frac{39}{10}.
\]
\end{proof}

\begin{lemma}[Six-bit dual certificate]
\label{lem:six-bit-dual}
For the prior $\mu^{\mathrm{lb}}$,
\[
S^*(\mu^{\mathrm{lb}})\leq 5.
\]
\end{lemma}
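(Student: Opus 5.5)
The plan is to exhibit an explicit feasible solution of the dual program written down in the proof of Lemma~\ref{lem:sender-value-lipschitz}, with dual objective $\sum_x \mu^{\mathrm{lb}}(x)\alpha_x = 5$. Weak duality then gives $S^*(\mu^{\mathrm{lb}})\le 5$. Since $\mu^{\mathrm{lb}}$ has support on only six states, the Bayes constraints for states outside the support impose nothing on the primal. So I will work with the restricted LP, whose dual has one free variable $\alpha_x$ for each of the six support states and nonnegative multipliers $\lambda_{a,i}$. I first note that $\lambda_{a,i}$ cannot be chosen independently of $a$. With $\lambda_{a,i}=\lambda_i$ the dual constraint separates across coordinates, and the best such bound is the coordinate envelope $M(\mu^{\mathrm{lb}})=6$: every marginal is at least $1/2$, for example $p_1=1/2$. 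The certificate must therefore be action-dependent.

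\textbf{Choosing $\alpha$.} I will pin down $\alpha$ by complementary slackness against the primal scheme of Lemma~\ref{lem:six-bit-primal}. For each supported action $a\in\{100111,011111,111011,111111\}$ and each state $x$ assigned positive mass under $a$, the dual constraint must be tight:
\[
\alpha_x-\sum_i(2a_i-1)\bigl(x_i-\tfrac12\bigr)\lambda_{a,i}=|a|_1 .
\]
Also, $\lambda_{a,i}$ may be positive only where the posterior marginal is exactly $1/2$, which the posterior table identifies. These equalities give a small linear system in the six values $\alpha_x$ and the relevant $\lambda_{a,i}$. I will solve it, choosing any remaining free parameters so as to leave slack elsewhere, and check that $\sum_x\mu^{\mathrm{lb}}(x)\alpha_x=5$.

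\textbf{Covering all $64$ actions.} The constraints for all $a\in\{0,1\}^6$ must then be verified. For a fixed $a$, the constraint family involves only $\lambda_{a,\cdot}$, so each $a$ is a separate feasibility problem. By Farkas' lemma, suitable $\lambda_{a,\cdot}\ge0$ exist if and only if, for every distribution $\nu$ on the six support states under which $a$ is weakly obedient, we have $\mathbb{E}_\nu[\alpha_X]\ge|a|_1$. This reduces the check to a concavification statement: no posterior can make recommendation $a$ obedient while the expected $\alpha$ stays below $|a|_1$. I will organize the check by $|a|_1$ and dispose of most actions by dominance:
\begin{itemize}
\item If $a_i=0$ is obedient under $\nu$, then switching to $a_i=1$ stays obedient whenever $\nu$'s marginal of $X_i$ is $1/2$. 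This lets me concentrate on actions with many ones.
\item If $|a|_1\le\min_x\alpha_x$, then $\lambda_{a,\cdot}=0$ already works.
\end{itemize}
For the remaining high-weight actions I will write down explicit $\lambda_{a,i}$, one per coordinate where $a_i=1$. In each case the multiplier penalizes the support states that have a $0$ in that coordinate.

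\textbf{Main obstacle.} The hard step is the $|a|_1\in\{5,6\}$ cases, especially $a=111111$ and the six actions of weight $5$. There the obedient posteriors range over a polytope of distributions on six points, and I need $\alpha$ to be large on states with few ones without overshooting the target value $5$. I will first compute the vertices of the obedience polytope for $a=111111$, namely the distributions supported on at most a few states with every marginal at least $1/2$, and check $\mathbb{E}_\nu\alpha\ge 6$ at each vertex. I will then read off $\lambda_{111111,i}$ from the tight vertices and repeat this for each weight-$5$ action.
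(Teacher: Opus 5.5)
\textbf{Gap: the certificate is never produced.} Your strategy matches the paper's: weak duality with an explicit dual solution on the six support states, checked against all $64$ actions. But you never produce that solution. By LP duality the lemma is \emph{equivalent} to the existence of a feasible $(\alpha,\lambda)$ with $\sum_x\mu^{\mathrm{lb}}(x)\alpha_x=5$. So ``I will solve the system, choose the free parameters, and check the remaining actions'' postpones the entire content of the proof, and the proposal contains no verified inequality. The paper supplies the certificate:
\begin{itemize}
\item $\alpha=(4,6,4,4,4,6)$ on the states $(001000,010111,100011,100101,100110,111000)$;
\item $\lambda_{a,2}=2$ for $a\in\{011111,110111,111011,111101,111110\}$;
\item $\lambda_{101111,3}=4$, and $\lambda_{101111,i}=2$ for $i=4,5,6$;
\item $\lambda_{111111,1}=2$, $\lambda_{111111,3}=10$, and $\lambda_{111111,i}=4$ for $i=4,5,6$.
\end{itemize}
Since $\min_x\alpha_x=4$, your second bullet disposes of every action of weight at most $4$. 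The remaining seven actions are then a direct check. For $a=111111$, the obedience polytope you planned to enumerate is a single point, the uniform posterior on $\{010111,111000\}$, and this is what those multipliers certify.

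\textbf{The complementary-slackness step leaves no real freedom.} Complementary slackness with the primal scheme yields only four equations:
\[
\alpha_{001000}+\alpha_{010111}=10,\quad
\alpha_{010111}+\alpha_{111000}=12,\quad
\alpha_{100011}+\alpha_{111000}=10,\quad
\alpha_{100101}+\alpha_{100110}=8.
\]
With $t=\alpha_{010111}$, this is a two-parameter family, every member of which has objective $5$. These parameters are not free to be chosen to ``leave slack elsewhere''; your own Farkas criterion pins them down.
\begin{itemize}
\item The uniform posterior on $\{010111,100011\}$ makes $110111$ obedient, which forces $t-1\geq 5$.
\item The uniform posteriors on $\{001000,100101\}$ and $\{001000,100110\}$ make $101101$ and $101110$ obedient. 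These force $\alpha_{100101}\geq t-2$ and $\alpha_{100110}\geq t-2$, hence $8\geq 2t-4$.
\end{itemize}
So $t=6$ and $\alpha_{100101}=\alpha_{100110}=4$. Any dual of value $5$ must satisfy these slackness equations, because the primal scheme attains $5$. Hence $\alpha$ is unique, and it is fixed precisely by the weight-$4$ and weight-$5$ checks you deferred.

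\textbf{Two smaller errors.}
\begin{itemize}
\item $p_3(\mu^{\mathrm{lb}})=2/5$, so $M(\mu^{\mathrm{lb}})=29/5$, not $6$. Your conclusion that the multipliers must depend on $a$ still holds, since $29/5>5$.
\item Your first dominance bullet is not a reduction. Posteriors whose marginal on coordinate $i$ is strictly below $1/2$ are obedient for $a$ but not for the action with $a_i$ flipped to $1$, so checking the flipped action does not cover them.
\end{itemize}
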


\begin{proof}
Define $C_i(x,a)$ as in the direct-obedience constraints. Let the dual variables
$y_x$ on the support states be
\[
\begin{array}{c|cccccc}
x & 001000 & 010111 & 100011 & 100101 & 100110 & 111000\\
\hline
y_x & 4 & 6 & 4 & 4 & 4 & 6.
\end{array}
\]
Let the only nonzero dual variables $\lambda_{a,i}$ be
\[
\begin{array}{c|c|c}
a & i & \lambda_{a,i}\\
\hline
011111 & 2 & 2\\
101111 & 3 & 4\\
101111 & 4 & 2\\
101111 & 5 & 2\\
101111 & 6 & 2\\
110111 & 2 & 2\\
111011 & 2 & 2\\
111101 & 2 & 2\\
111110 & 2 & 2\\
111111 & 1 & 2\\
111111 & 3 & 10\\
111111 & 4 & 4\\
111111 & 5 & 4\\
111111 & 6 & 4.
\end{array}
\]
All bit indices are counted from left to right, and all omitted
$\lambda_{a,i}$ are zero. Since every direct scheme for prior $\mu^{\mathrm{lb}}$
assigns zero mass to states outside the support of $\mu^{\mathrm{lb}}$, checking
these inequalities on the six support states is sufficient. The displayed dual
variables satisfy the following finite set of inequalities over the six support
states and the $64$ possible recommended actions:
\[
|a|_1\leq y_x+\sum_{i=1}^6 \lambda_{a,i}C_i(x,a)
\quad\text{for every support state }x\text{ and every action }a.
\]
Therefore, for every obedient direct scheme $q$,
\[
\begin{aligned}
S(q)
&=\sum_{x,a} q(x,a)|a|_1\\
&\leq
\sum_{x,a}q(x,a)y_x
+\sum_{a,i}\lambda_{a,i}\sum_x q(x,a)C_i(x,a)\\
&\leq
\sum_x \mu^{\mathrm{lb}}(x)y_x
=5.
\end{aligned}
\]
Taking the maximum over obedient direct schemes gives
$S^*(\mu^{\mathrm{lb}})\leq 5$.
\end{proof}

\begin{theorem}[A six-bit lower bound]
\label{thm:six-bit-lower-bound}
For the prior $\mu^{\mathrm{lb}}$,
\[
R_0(\mu^{\mathrm{lb}})=\frac{31}{10},
\qquad
S^*(\mu^{\mathrm{lb}})=5,
\qquad
R_{\max}(\mu^{\mathrm{lb}})=\frac{39}{10}.
\]
Hence
\[
\frac{R_{\max}(\mu^{\mathrm{lb}})}{R_0(\mu^{\mathrm{lb}})}
=
\frac{39}{31}
\approx 1.2580645
>
\frac54.
\]
\end{theorem}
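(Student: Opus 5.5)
The proof assembles three ingredients: a direct computation of $R_0$, the two preceding lemmas for $S^*$, and the sender-receiver tradeoff for the upper bound on $R_{\max}$.

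\textbf{Step 1: compute $R_0(\mu^{\mathrm{lb}})$.} We read off the marginals from the six support states:
\[
\begin{aligned}
p_1&=\tfrac1{10}+\tfrac1{10}+\tfrac1{10}+\tfrac15=\tfrac12,\\
p_2&=\tfrac3{10}+\tfrac15=\tfrac12,\\
p_3&=\tfrac15+\tfrac15=\tfrac25,\\
p_4&=\tfrac3{10}+\tfrac1{10}+\tfrac1{10}=\tfrac12,\\
p_5&=\tfrac3{10}+\tfrac1{10}+\tfrac1{10}=\tfrac12,\\
p_6&=\tfrac3{10}+\tfrac1{10}+\tfrac1{10}=\tfrac12.
\end{aligned}
\]
Hence
\[
R_0=5\cdot\tfrac12+\tfrac35=\tfrac{31}{10}.
\]
We also record
\[
P(\mu^{\mathrm{lb}})=\tfrac52+\tfrac25=\tfrac{29}{10},
\qquad
M(\mu^{\mathrm{lb}})=5+\tfrac45=\tfrac{29}{5}.
\]
As a sanity check against Lemma~\ref{lem:prior-utility-identity}:
\[
6+\tfrac{29}{10}-\tfrac{58}{10}=\tfrac{31}{10}.
\]

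\textbf{Step 2: $S^*(\mu^{\mathrm{lb}})=5$.} Lemma~\ref{lem:six-bit-primal} exhibits an obedient scheme with sender value $5$, so $S^*(\mu^{\mathrm{lb}})\geq 5$. Lemma~\ref{lem:six-bit-dual} gives $S^*(\mu^{\mathrm{lb}})\leq 5$.

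\textbf{Step 3: $R_{\max}(\mu^{\mathrm{lb}})=39/10$.} The scheme $q$ of Lemma~\ref{lem:six-bit-primal} is sender-optimal by Step 2, so $R_{\max}\geq 39/10$. For the reverse inequality, apply Lemma~\ref{lem:sender-receiver-tradeoff} to any sender-optimal scheme:
\[
R(q)\leq n+P(\mu^{\mathrm{lb}})-S^*(\mu^{\mathrm{lb}})=6+\tfrac{29}{10}-5=\tfrac{39}{10}.
\]
Equivalently, Lemma~\ref{lem:sender-value-deficit} with $\Delta=M-S^*=\tfrac45$ gives
\[
R_{\max}\leq \tfrac{31}{10}+\tfrac{8}{10}=\tfrac{39}{10}.
\]
This matches the primal value, so the tradeoff inequality is tight here. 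That tightness is consistent with the primal scheme never recommending $0$ when the state bit is $1$, which is exactly the equality case of $\mathbf{1}\{a_i=x_i\}+a_i\leq 1+x_i$.

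\textbf{Step 4: the ratio.} Dividing gives $(39/10)/(31/10)=39/31$. Since $39\cdot 4=156>155=31\cdot5$, we have $39/31>5/4$.

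The only potentially delicate part is the verification inside Lemma~\ref{lem:six-bit-dual}, namely the $6\times 64$ dual inequalities. That lemma may be assumed, so the main remaining care is computing the marginals correctly and noticing that the tradeoff bound is tight.
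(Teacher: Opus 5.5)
Your proposal is correct and follows the paper's proof essentially step for step. You compute the marginals to get $R_0=31/10$ and $P=29/10$, combine Lemmas~\ref{lem:six-bit-primal} and~\ref{lem:six-bit-dual} for $S^*=5$, and apply Lemma~\ref{lem:sender-receiver-tradeoff} to cap $R_{\max}$ at $6+29/10-5=39/10$; your extra remarks are also accurate, namely the equivalent route through Lemma~\ref{lem:sender-value-deficit} with $\Delta=4/5$, and the observation that the primal scheme never recommends $0$ on a coordinate whose state bit is $1$, which is why the tradeoff is tight.
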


\begin{proof}
The bit marginals of $\mu^{\mathrm{lb}}$ are
\[
\left(\frac12,\frac12,\frac25,\frac12,\frac12,\frac12\right),
\]
so
\[
R_0(\mu^{\mathrm{lb}})
=
\frac12+\frac12+\frac35+\frac12+\frac12+\frac12
=\frac{31}{10}.
\]
The sum of the marginals is
\[
P(\mu^{\mathrm{lb}})=\frac{29}{10}.
\]

The six-bit primal scheme, Lemma~\ref{lem:six-bit-primal}, gives an obedient
direct scheme $q$ with $S(q)=5$ and $R(q)=39/10$. The six-bit dual certificate,
Lemma~\ref{lem:six-bit-dual}, gives $S^*(\mu^{\mathrm{lb}})\leq 5$. Therefore
$S^*(\mu^{\mathrm{lb}})=5$, and the primal scheme is sender-optimal.

The sender-receiver tradeoff, Lemma~\ref{lem:sender-receiver-tradeoff}, says
that every obedient direct scheme satisfies
$R(q')+S(q')\leq n+P(\mu^{\mathrm{lb}})$. Since
$P(\mu^{\mathrm{lb}})=29/10$ and every sender-optimal $q'$ has $S(q')=5$,
\[
R(q')\leq 6+\frac{29}{10}-5=\frac{39}{10}.
\]
Since the six-bit primal scheme attains receiver utility $39/10$ among
sender-optimal obedient direct schemes,
\[
R_{\max}(\mu^{\mathrm{lb}})=\frac{39}{10}.
\]
\end{proof}

\section{Future Work}

The first open direction is to sharpen the constant.
Theorem~\ref{thm:universal-upper-bound} gives the upper bound $3/2$, while
Theorem~\ref{thm:six-bit-lower-bound} shows that any universal bound must
exceed $5/4$. It remains to determine the optimal constant, or to find
structural assumptions on $\mu$ under which the upper bound can be improved.

A second direction is to add multiple senders. Competition can force additional
information revelation, but general multi-sender persuasion has difficult
equilibrium structure. The bit-string model gives a concrete setting in which
one can ask how receiver utility changes as the number of senders, their
preferred actions, and the correlation structure of their signals vary.

A third direction is to weaken the sender's information. In the present model
the sender observes the full state $X$. A more realistic model may let the
sender observe only a signal about $X$, or let different senders observe
different partial signals. This would separate the effect of incentive
misalignment from the effect of incomplete agent knowledge.

Finally, one can change the receiver's utility. Additive correctness makes the
receiver's best response decompose coordinate by coordinate. Submodular
utilities over correctly guessed coordinates would model diminishing returns to
information and interactions between bits. Extending the present bounds to such
utilities would require replacing the coordinatewise obedience and payoff
arguments with statements about set-valued marginal gains.

\bibliographystyle{ACM-Reference-Format}
\bibliography{references}

\end{document}